\documentclass[conference]{IEEEtran}

\usepackage{amsmath,amssymb,amsfonts}
\usepackage{algorithmic}
\usepackage{graphicx}
\usepackage{textcomp}
\usepackage{xcolor}
\usepackage{booktabs}
\usepackage{multirow}
\usepackage{array}
\usepackage{pifont}
\usepackage{url}
\usepackage{hyperref}
\usepackage{cite}
\usepackage{bm}
\newtheorem{proposition}{Proposition}
\newtheorem{proof}{proof}
\newcommand{\cmark}{\ding{51}}
\hypersetup{
	colorlinks=true,
	linkcolor=black,
	citecolor=black,
	urlcolor=black
}

\begin{document}
	
	\title{Privacy Leakage in Federated Learning: Gradient-Based Client Identity Inference and Defenses for Inertial Sensing in Vehicular Edge Networks}

	\author{
		\IEEEauthorblockN{
			Ali Akarma\IEEEauthorrefmark{1}\IEEEauthorrefmark{2},
			Toqeer Ali Syed\IEEEauthorrefmark{1},
			Muhammad Khan\IEEEauthorrefmark{3},
			Qurat-ul-ain Mastoi\IEEEauthorrefmark{3},
			and Adeel Ahmad\IEEEauthorrefmark{1}
		}
		\IEEEauthorblockA{
			\IEEEauthorrefmark{1}AI Center, Faculty of Computer and Information Systems\\
			Islamic University of Madinah, Madinah 42351, Saudi Arabia
		}
		\IEEEauthorblockA{
			\IEEEauthorrefmark{2}AI V\&V Lab, King Fahd University of Petroleum and Minerals\\
			Dhahran 31261, Saudi Arabia
		}
		\IEEEauthorblockA{
			\IEEEauthorrefmark{3}School of Computer Science and Creative Technologies\\
			University of the West of England, Bristol BS16 1QY, U.K.
		}
		\IEEEauthorblockA{
			Corresponding author: 443059463@stu.iu.edu.sa
		}
	}

	\maketitle

	\begin{abstract}
		As vehicular networks move toward 5G/6G edge intelligence, federated learning (FL) is widely promoted as a privacy-preserving way for vehicles and infrastructure to train shared models without exposing raw sensor data. Yet the updates clients transmit still leak enough information to identify who sent them, which threatens the anonymity that safety-critical V2X applications assume and adds to existing concerns over adversarial ML, model poisoning, and backdoor attacks. We study server-side client identity inference from transmitted weight deltas using inertial (IMU) measurements, evaluated on the UCI Human Activity Recognition (HAR) benchmark as an accessible proxy for the IMU streams produced onboard connected vehicles. Across five attack classifiers and five non-IID partitions, an honest-but-curious server recovers client identity with near-perfect accuracy ($\approx$1.000) from undefended updates, confirming a concrete identifiability risk. We then quantify the privacy-utility trade-off of a lightweight clip-then-noise defense by sweeping Gaussian noise ($\sigma \in \{0.00, 0.05, 0.10, 0.20, 0.50, 1.00\}$) at fixed clipping ($C=1.0$), and report formal $(\epsilon,\delta)$-DP budgets through R\'{e}nyi accounting. A practical region ($\sigma\!\in\![0.1, 0.2]$) drives attack accuracy to near-random while costing under 5\% relative FL accuracy. Ensemble FL supplies complementary structural privacy with a $1/K$ anonymity-set bound and no noise penalty. Results are supported by cryptographic (SHA-256) train/evaluation gradient disjointness, three seeds, and a count-normalized attacker-advantage metric. We position HAR explicitly as a proxy and discuss what validation on true vehicular telemetry would require.
	\end{abstract}
	
	\begin{IEEEkeywords}
		Federated learning, V2X, vehicular networks, gradient leakage, client identity inference, adversarial ML, trustworthy AI, privacy-utility tradeoff.
	\end{IEEEkeywords}

	\section{Introduction}
	\label{sec:intro}
	
	Machine learning and artificial intelligence increasingly drive vehicular applications such as autonomous driving, cooperative perception, and smart-city services as networks advance to 5G and 6G \cite{jan2026eagf}. Federated learning (FL)~\cite{mcmahan2017communication} has become the default privacy-preserving recipe for training these services: vehicles collaboratively improve a global model while keeping their sensor data local to the device. That choice, however, introduces attacks that cryptography alone does not address, because the gradients themselves leak. An honest-but-curious server can study the distribution of incoming updates and match them to individual vehicles, quietly de-anonymizing the network. In safety-critical settings, where trust under adversarial ML, privacy leakage, model poisoning, and backdoor attacks is already fragile, this leakage demands scrutiny.
	
	Inertial measurement is a useful lens on the problem because connected vehicles continuously emit IMU-style streams in dynamic, heterogeneous, and resource-constrained conditions. Such data leave distinctive traces in local gradients that remain recoverable even under non-IID partitioning~\cite{geiping2020inverting}. Prior work has concentrated on input reconstruction~\cite{zhu2019deep} or membership inference~\cite{shokri2017membership,syed2026agenticdt}; client identity inference has received far less attention. A subtler problem is that some defense evaluations protect the per-batch gradient rather than the artifact a client actually transmits, which overstates the privacy that survives in practice. We address both gaps with a corrected server-side threat model and a multi-classifier empirical analysis. We are explicit about scope: the experiments use the UCI HAR benchmark, which shares IMU modality with vehicular sensing but not its dynamics, so we treat HAR as a proxy and defer validation on true vehicular telemetry to Section~\ref{subsec:limits}.
	
	\textbf{Threat model.} The server is honest-but-curious: it aggregates FL updates correctly but tries to attribute each transmitted weight delta $\hat{\Delta}_i^t$ to the client that produced it. The adversary observes only the defended, transmitted deltas; it cannot access raw data, local gradients, or per-batch updates. Modeling the transmitted delta, rather than a per-batch gradient, removes a common source of optimism that inflates measured privacy.
	
	\textbf{Contributions:}
	\begin{enumerate}
		\item \textbf{Corrected threat model.} The defense is applied to the transmitted weight delta, eliminating a flaw present in evaluations that defend an artifact the server never observes.
		\item \textbf{Five-classifier benchmark.} Random and majority baselines, logistic regression, random forest, and a neural attacker, ranked by expressive power.
		\item \textbf{Cryptographically verified evaluation.} SHA-256 hashing enforces train/evaluation gradient-set disjointness, so reported accuracy cannot be an artifact of train/test leakage.
		\item \textbf{Privacy-utility quantification.} A six-level Gaussian noise sweep with formal $(\epsilon,\delta)$-DP estimates via R\'{e}nyi DP, identifying the practical region ($\sigma\!\in\![0.1, 0.2]$).
		\item \textbf{Normalized Attacker Advantage (NAA).} A count-normalized rescaling that makes attacker advantage comparable across experiments with different client counts $N$.
		\item \textbf{Ensemble FL structural privacy.} Client-group isolation across $K \in \{2,3,5\}$ with a formal $1/K$ upper bound on achievable attack accuracy.
	\end{enumerate}

	\section{Related Work}
	\label{sec:related}
	
	Federated averaging~\cite{mcmahan2017communication} defines the standard FL setting where clients perform local SGD and send updates to a central server without exposing raw data. Surveys by Mothukuri et al.~\cite{mothukuri2021survey} and Lyu et al.~\cite{lyu2024privacy} categorize FL protocols, threats, and inference attacks in IoT and vehicular environments. Our work focuses on server-side client identity inference and extends prior studies into a calibrated privacy--utility analysis. Hsieh et al.~\cite{hsieh2020quagmire} further show how non-IID data distributions increase inference vulnerability, motivating our fleet-diversity model.
	
	Most gradient-leakage studies target reconstruction attacks. Zhu et al.~\cite{zhu2019deep} reconstruct inputs from gradients, while Geiping et al.~\cite{geiping2020inverting} scale inversion attacks to larger batches using cosine similarity. Membership inference methods~\cite{shokri2017membership,nasr2019comprehensive,ismail2014design} determine whether a record participated in training but do not attribute updates to specific clients. Wang et al.~\cite{wang2019beyond} are closest to our work, showing that non-IID FL gradients are separable using lightweight classifiers. Our approach differs in three ways: (1) the attacker observes defended transmitted weight deltas instead of per-batch gradients, (2) evaluation uses cryptographically disjoint train/test sets across five classifiers, and (3) we analyze defense operating points rather than reporting a single accuracy metric. Table~\ref{tab:related} summarizes these differences.
	
	For defenses, Dwork and Roth~\cite{dwork2014algorithmic} formalize differential privacy, Abadi et al.~\cite{abadi2016deep} introduce DP-SGD with gradient clipping, and Mironov~\cite{mironov2017renyi} proposes the R'{e}nyi accountant used to derive the $(\epsilon,\delta)$ privacy budget. Geyer et al.~\cite{geyer2017differentially} study client-level DP to hide participation itself, whereas our goal is preventing attribution after participation is known. Secure aggregation~\cite{bonawitz2017practical} cryptographically conceals updates but introduces coordination overhead unsuitable for many edge devices. Sattler et al.~\cite{sattler2020robust} demonstrate that sparse ternary updates preserve accuracy under non-IID data while altering exploitable gradient signatures. In vehicular networks, hierarchical blockchain-enabled FL~\cite{syed2026fedagent,syed2026climate} provides auditable trust management for the Internet of Vehicles, complementing the update-level defenses studied here.

	\begin{table}[t]
		\centering
		\caption{Positioning relative to prior gradient-privacy work. Eval.\ unit: artifact the attack/defense is evaluated on (PBG\,=\,per-batch gradient, WD\,=\,transmitted weight delta). Calib.: reports a privacy-utility operating point.}
		\label{tab:related}
		\footnotesize
		\setlength{\tabcolsep}{4pt}
		\begin{tabular}{lllccc}
			\toprule
			\textbf{Work} & \textbf{Target} & \textbf{Eval.\ unit} & \textbf{DP} & \textbf{Calib.} & \textbf{$N$} \\
			\midrule
			Zhu~\cite{zhu2019deep}            & Reconstruction      & PBG     & ---     & ---     & ---     \\
			Geiping~\cite{geiping2020inverting} & Reconstruction    & PBG     & ---     & ---     & ---     \\
			Shokri~\cite{shokri2017membership} & Membership         & Outputs & ---     & ---     & ---     \\
			Nasr~\cite{nasr2019comprehensive}  & Membership (WB)    & Grad.   & ---     & ---     & ---     \\
			Wang~\cite{wang2019beyond}         & Client identity    & PBG     & ---     & ---     & 10--20  \\
			\textbf{This work}                 & Client identity    & \textbf{WD} & \cmark & \cmark & 10 \\
			\bottomrule
		\end{tabular}
	\end{table}

	\section{Methodology}
	\label{sec:method}
	
	\subsection{FL Formulation and Data Partitioning}
	
	We consider a standard cross-device FL scenario with $N\!=\!10$ clients and one central server. The global objective minimizes:
	\begin{equation}
		F(\bm{w}) = \sum_{i=1}^{N} \frac{D_i}{D} F_i(\bm{w}),
		\label{eq:global_obj}
	\end{equation}
	where $D_i$ is the local sample count and $F_i(\bm{w})$ is client $i$'s empirical loss. Each round $t$, client $i$ runs $E$ local SGD epochs, computes the weight delta $\Delta_i^t = \bm{w}_i^{t+1} - \bm{w}^t$, applies the clip-then-noise defense to obtain $\hat{\Delta}_i^t$ (Section~\ref{subsec:defense}), and uploads it. The server aggregates via FedAvg:
	\begin{equation}
		\bm{w}^{t+1} = \bm{w}^t + \sum_{i=1}^{N} \frac{D_i}{D} \hat{\Delta}_i^t.
		\label{eq:fedavg}
	\end{equation}
	Our primary experiments use Dirichlet($\alpha\!=\!0.3$) label skew~\cite{hsieh2020quagmire}, modeling the heterogeneity real fleets exhibit across road conditions, driving behaviors, and geography. We additionally evaluate four partitions to probe how heterogeneity affects identifiability: IID (uniform random), pathological ($K_{\text{cls}}$ classes per client), quantity skew (Dirichlet($\beta$) sample counts), and feature skew (per-client Gaussian feature noise).
	
	\subsection{Task Model and Attack Protocol}
	\label{subsec:task}
	
	The task model is a multilayer perceptron (MLP) $[561\!\to\!256\!\to\!128\!\to\!6]$ with LayerNorm, ReLU, and dropout (0.3) after each hidden layer. We use LayerNorm rather than BatchNorm because BatchNorm accumulates running mean and variance statistics that FedAvg averages incorrectly across non-IID clients, degrading convergence in the non-stationary vehicular regime~\cite{li2021fedbn}. The complete system is illustrated in Figure~\ref{fig:architecture}.
	
	\begin{figure}[t]
		\centering
		\includegraphics[width=\linewidth]{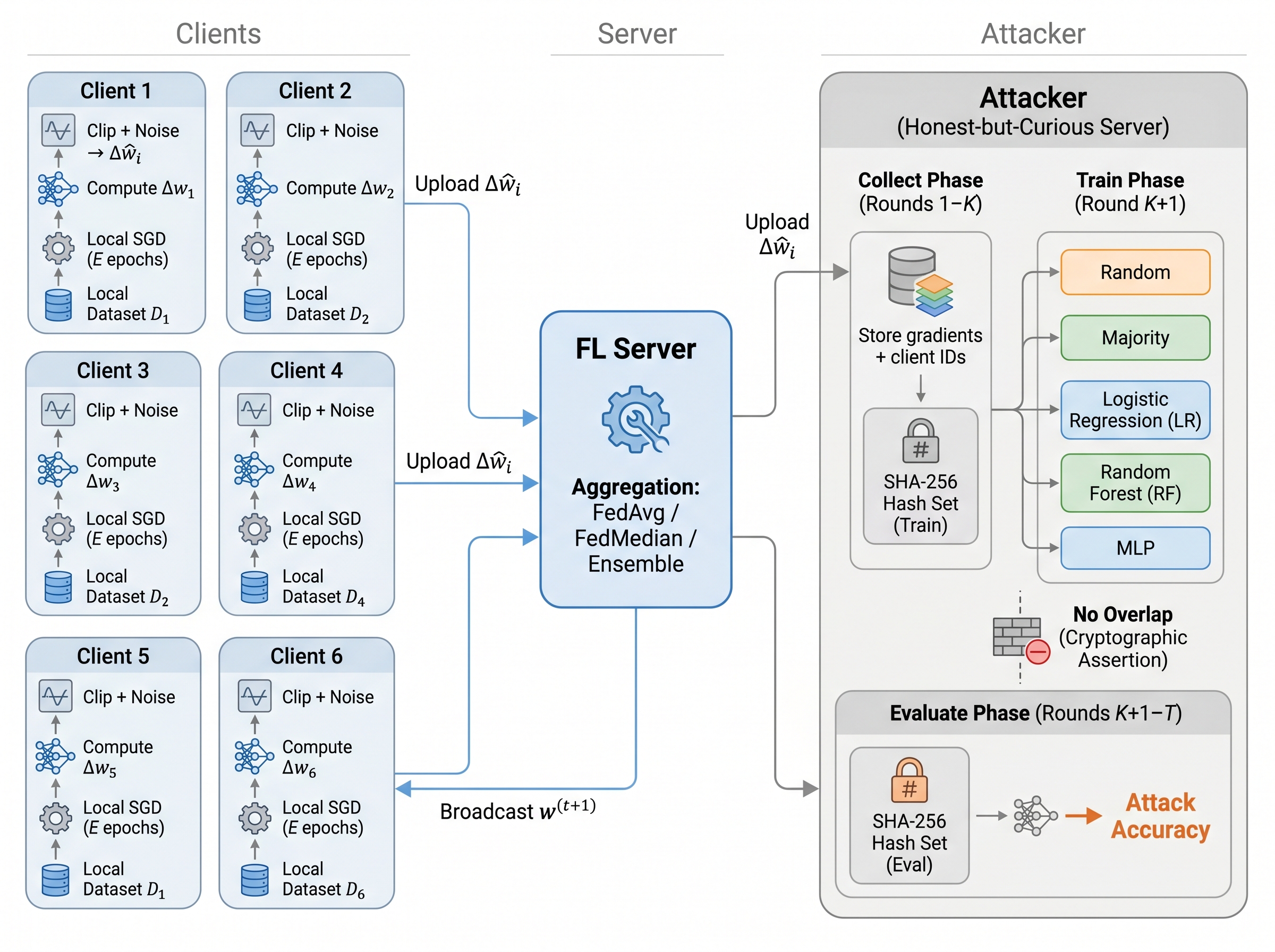}
		\caption{System architecture of a differentially private federated learning framework under an inference attack by an honest-but-curious server.}
		\label{fig:architecture}
	\end{figure}
	
	The attack proceeds in three phases. \textbf{(1)~Collection} (rounds 1--$K\!=\!8$): the server records defended weight deltas $\hat{\Delta}_i^t$ with SHA-256 hash registration. \textbf{(2)~Training} (round $K\!+\!1$): five classifiers, ranked by expressive power, train on $(\hat{\Delta}_i^t, \text{client\_id}_i)$ pairs after L2 normalization, Z-score standardization, and PCA (50 components). The classifiers are a random baseline (hard lower bound $1/N$), a majority baseline (guards class imbalance), $\ell_2$-regularized logistic regression ($C\!=\!1.0$, a linear-separability test), a 100-tree random forest (non-linear baseline), and an MLP attacker $[d\!\to\!256\!\to\!128\!\to\!N]$ (strongest). \textbf{(3)~Evaluation} (rounds $K\!+\!1$--$T\!=\!20$): the trained classifiers are applied to new uploads, with SHA-256 hash disjointness enforced so that no evaluation update was seen during training.
	
	\subsection{Defense Mechanisms}
	\label{subsec:defense}
	
	Two defenses are applied sequentially to the transmitted weight delta $\Delta_i^t$, both chosen for low overhead compatible with vehicular edge constraints.
	
	\textbf{Gradient Clipping} bounds the $\ell_2$ norm to at most $C$:
	\begin{equation}
		\tilde{\Delta}_i^t = \Delta_i^t \cdot \min\!\left(1,\; \frac{C}{\|\Delta_i^t\|_2}\right).
		\label{eq:clipping}
	\end{equation}
	
	\textbf{Gaussian Noise Injection} adds isotropic noise calibrated to the clipped sensitivity:
	\begin{equation}
		\hat{\Delta}_i^t = \tilde{\Delta}_i^t + \mathcal{N}\!\left(\bm{0},\; \sigma^2 C^2 \bm{I}\right).
		\label{eq:noise}
	\end{equation}
	The effective noise scale $\sigma C$ follows the Gaussian mechanism~\cite{dwork2014algorithmic,abadi2016deep}. In E2, $C\!=\!1.0$ is fixed and $\sigma\!=\!0.00$ denotes clipping only, not the undefended baseline (E1).
	
	\subsection{Formal Differential Privacy Analysis}
	\label{subsec:dp}
	
	Under the Gaussian mechanism, a single defense step satisfies $(\alpha, \epsilon_\alpha)$-R\'{e}nyi DP~\cite{mironov2017renyi} with $\epsilon_\alpha = \alpha/(2\sigma^2)$. After $T$ rounds, converting to $(\epsilon,\delta)$-DP at $\delta\!=\!10^{-5}$:
	\begin{equation}
		\epsilon(\delta) = \min_{\alpha > 1}\!\left[\frac{T\alpha}{2\sigma^2} + \frac{\ln(1/\delta)}{\alpha-1}\right].
		\label{eq:rdp_convert}
	\end{equation}
	Table~\ref{tab:dp_budget} reports $\epsilon$ at $T\!=\!20$, $C\!=\!1.0$. Modest noise ($\sigma\!=\!0.1$) already incurs a large $\epsilon$ ($\approx$296), which confirms that low noise gives empirical rather than formal DP and motivates the combined defense.
	
	\begin{table}[t]
		\centering
		\caption{Formal $(\epsilon,\delta)$-DP Budget ($T\!=\!20$, $C\!=\!1.0$, $\delta\!=\!10^{-5}$)}
		\label{tab:dp_budget}
		\begin{tabular}{ccc}
			\toprule
			$\sigma$ & Optimal $\alpha$ & $\epsilon\,(\delta\!=\!10^{-5})$ \\
			\midrule
			0.05 & 1.07 & $>$1000 \\
			0.10 & 1.24 & $\approx$296 \\
			0.20 & 1.76 & $\approx$74 \\
			0.50 & 3.07 & $\approx$12.1 \\
			1.00 & 2.07 & $\approx$31.4 \\
			\bottomrule
		\end{tabular}
	\end{table}
	
	\subsection{Ensemble FL and Privacy Metrics}
	\label{subsec:ensemble}
	
	In the Ensemble configuration, clients are split into non-overlapping groups of $K$ clients, each served by its own sub-model and communicating exclusively with it. This maps naturally to vehicular segmentation by RSU coverage zone or trust domain, giving structural privacy without per-update noise overhead.
	
	\begin{proposition}[Ensemble FL Identity-Inference Bound]
		Partition the $N$ clients into non-overlapping groups of $K$ clients, each served by its own sub-model. A server-side attacker observing a single sub-model identifies the originating client with probability at most $1/K$, provided updates within a group are statistically indistinguishable.
	\end{proposition}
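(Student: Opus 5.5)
The plan is to turn the informal ``statistically indistinguishable'' hypothesis into an exchangeability condition and then show that any attribution rule succeeds with probability at most $1/K$. Fix one group $G=\{i_1,\dots,i_K\}$ served by its own sub-model. The attacker sees an update $\hat{\Delta}$ whose originating client $J$ is uniform on $G$; this uniform prior is the natural one, since each client in the group uploads once per round. We also allow the attacker any side information $Z$ it has gathered from that sub-model, for example deltas from earlier rounds used to train its classifier. Formally, we read ``statistically indistinguishable'' as saying that the conditional law of $(\hat{\Delta},Z)$ given $J=j$ is one common distribution $P$ for every $j\in G$. A stronger and equivalent reading is that the joint law of the group's transmitted deltas is invariant under relabelling the clients. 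Because each sub-model is trained only on its own group, nothing in the observation depends on clients outside $G$. So the attacker's target space really is $G$, of size $K$, rather than all $N$ clients.

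The argument then has two steps. First, for any decision rule $g$, which may be randomised and may be one of the paper's five classifiers, we write the success probability as
\[
\Pr[g(\hat{\Delta},Z)=J]=\frac{1}{K}\sum_{j\in G}\Pr_{P}[g(\hat{\Delta},Z)=j].
\]
This step uses the fact that, conditional on $J=j$, the observation has law $P$ regardless of $j$. Second, the events $\{g=j\}$ for $j\in G$ are disjoint. Under $P$ their probabilities therefore sum to at most $1$, which gives the bound $1/K$. Equivalently, by Bayes' rule the posterior of $J$ given the observation equals the uniform prior, so even the MAP rule achieves only $1/K$. If the attacker may guess a client outside $G$, that guess is always wrong and only lowers the success probability.

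The main obstacle is that exact indistinguishability is idealised. Under the Dirichlet non-IID partitions the paper uses, it is false, and E1 shows that clients are in fact separable. So I would also prove a robust version. Let $P_j$ be the law of the observation given $J=j$, and let $\eta=\max_{j,k}\mathrm{TV}(P_j,P_k)$. Comparing each $\Pr_{P_j}[g=j]$ with $\Pr_{P_1}[g=j]$ in the first step gives success probability at most $1/K+\eta$. Alternatively, if each client's mechanism in Eq.~(\ref{eq:noise}) is $(\epsilon,\delta)$-DP with respect to swapping clients, the bound becomes $e^{\epsilon}/K+\delta$ by the standard argument. This shows that the $1/K$ bound rests entirely on the indistinguishability assumption. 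In practice, the clip-then-noise defense of Section~\ref{subsec:defense} is what makes $\eta$ small. Finally, I would remark that the bound is tight, because uniform random guessing within $G$ achieves exactly $1/K$.
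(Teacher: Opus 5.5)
Your argument is correct and is essentially the paper's sketch made rigorous: the paper just asserts that, under within-group indistinguishability, no strategy beats a uniform guess among the $K$ members. Your formalization is exactly that step: a common conditional law of $(\hat{\Delta},Z)$ for every $J=j\in G$, a uniform prior, and the disjoint events $\{g=j\}$. Your total-variation and $(\epsilon,\delta)$ extensions go beyond what the paper states.

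One correction: the relabelling-invariance reading you call ``stronger and equivalent'' is neither once $Z$ contains labelled history. Exchangeable per-client fingerprints, such as a symmetric Dirichlet partition produces, give a joint law that is invariant under relabelling. Yet they still let a trained classifier attribute updates almost perfectly, so keep the common-conditional-law formulation as your hypothesis.
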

	\begin{proof}[Sketch]
		A sub-model receives updates from the $K$ clients in its group. When within-group updates are indistinguishable, the attacker's best strategy is a uniform guess among the $K$ members, achieving accuracy $1/K$; any better strategy would require distinguishing within-group gradients, contradicting the assumption.
	\end{proof}
	\noindent\textit{Remark.} The $1/K$ bound is a best case for the defender: it holds only while within-group updates are indistinguishable. Heavy intra-group label or feature skew, or very small groups, can restore separability and push empirical attack accuracy above $1/K$. Deployments should therefore monitor within-group separability rather than assume the bound, and treat $K$ as a tunable anonymity-set size.
	
	Privacy and utility are summarized by the \textbf{Privacy Score (PS)} $= 1 - \text{best\_attack\_accuracy}$ and the \textbf{Normalized Attacker Advantage:}
	\begin{equation}
		\mathrm{NAA} = \frac{\text{attack\_acc} - 1/N}{1 - 1/N},
		\label{eq:naa}
	\end{equation}
	where $\mathrm{NAA}\!=\!0$ indicates no gain over random guessing and $\mathrm{NAA}\!=\!1$ a perfect attack. NAA is a count-normalized rescaling of attack accuracy, not a new statistical quantity: it plays the role membership advantage plays for membership inference~\cite{shokri2017membership}, but normalizes by the random-guess floor $1/N$ so advantage is comparable across experiments with different client counts. We use it for cross-experiment comparison, not as a standalone privacy guarantee.

	\section{Experimental Setup}
	\label{sec:setup}
	
	The UCI HAR dataset~\cite{anguita2013public} contains 10{,}299 observations from 30 subjects performing six activities, each a 561-dimensional feature vector derived from triaxial accelerometer and gyroscope readings at 50~Hz. This is the IMU modality shared with vehicular onboard driving-state and occupant-behavior recognition, which is why we adopt it as a proxy; Section~\ref{subsec:limits} states the limits of that choice. Training samples are partitioned across $N\!=\!10$ clients using Dirichlet($\alpha\!=\!0.3$), and all features are standardized using training-set statistics only. Seven conditions are evaluated: E1 (no-defense FedAvg); E2 (Gaussian noise sweep, $\sigma \in \{0.00, 0.05, 0.10, 0.20, 0.50, 1.00\}$, fixed $C\!=\!1.0$); E3 (clipping ablation, $C \in \{0.1, 0.5, 1.0, 2.0, 5.0\}$, $\sigma\!=\!0$); E4 (combined defense); E5 (Ensemble FL, $K \in \{2,3,5\}$); E6--E7 (non-IID heterogeneity). Hyperparameters appear in Table~\ref{tab:hyperparams}; all results are mean $\pm$ std over three seeds (42, 43, 44).
	
	\begin{table}[t]
		\centering
		\caption{Hyperparameter Configuration}
		\label{tab:hyperparams}
		\resizebox{0.5\textwidth}{!}{
			\begin{tabular}{lll}
				\toprule
				\textbf{Component} & \textbf{Hyperparameter} & \textbf{Value} \\
				\midrule
				\multirow{3}{*}{FL Training}
				& Rounds $T$ / Local epochs $E$   & 20 / 5 \\
				& Batch size / LR                  & 32 / 0.01 (SGD, $\mu\!=\!0.9$) \\
				& Clients $N$ / Aggregation        & 10 / FedAvg \\
				\midrule
				\multirow{2}{*}{Task Model}
				& Architecture / Norm.\            & MLP $[561\!\to\!256\!\to\!128\!\to\!6]$ / LN \\
				& Dropout / Weight init.           & 0.3 / Kaiming normal \\
				\midrule
				\multirow{2}{*}{Attack}
				& Collect rounds $K$ / Eval start  & 8 / round 9 \\
				& PCA components / Std.            & 50 / Z-score \\
				\midrule
				Defense  & Clipping $C$ / Noise $\sigma$    & 1.0 / 0.0--1.0 \\
				Repro.\  & Master seed / count              & 42 / 3 \\
				\bottomrule
			\end{tabular}
		}
	\end{table}

	\section{Results and Discussion}
	\label{sec}
	
	\subsection{Baseline: Gradient Identifiability (E1)}
	
	Without defense under Dirichlet($\alpha!=!0.3$) at $N!=!10$, logistic regression, random forest, and MLP attackers achieve near-perfect accuracy on held-out uploads (rounds 9--20), each scoring $\approx!1.000$ (PS~$\approx$~0, NAA~$\approx$~1), while random and majority baselines remain at $0.100$. This strong separability indicates that client gradients occupy distinct regions of parameter space. Consequently, an honest-but-curious 5G/6G edge aggregator can de-anonymize V2X participants using model updates alone.
	
	\subsection{Privacy-Utility Tradeoff Under Gaussian Noise (E2)}
	
	Table~\ref{tab:noise_sweep} shows the Gaussian noise sweep at fixed $C!=!1.0$. Clipping alone ($\sigma!=!0.00$) reduces attack accuracy from $\approx!1.000$ to $0.127!\pm!0.020$, showing that magnitude normalization weakens gradient separability while directional information still leaks identity. Increasing noise progressively reduces FL accuracy: $\sigma!=!0.2$ lowers FL accuracy to $0.896$, while $\sigma!=!1.0$ collapses performance to $0.382$. The range $\sigma!\in![0.1,0.2]$ provides the best tradeoff, keeping attack accuracy near random while preserving FL accuracy above $0.89$, making it suitable for vehicular edge deployments.
	
	\begin{table}[t]
		\centering
		\caption{Privacy-Utility Tradeoff, Gaussian Noise Sweep (E2, fixed $C\!=\!1.0$)}
		\label{tab:noise_sweep}
		\begin{tabular}{lccc}
			\toprule
			$\bm{\sigma}$ & \textbf{FL Acc.\ ($\pm$std)} & \textbf{Atk.\ Acc.\ ($\pm$std)} & \textbf{PS} \\
			\midrule
			0.00 & $0.944 \pm 0.004$ & $0.127 \pm 0.020$ & 0.873 \\
			0.05 & $0.942 \pm 0.004$ & $0.164 \pm 0.034$ & 0.836 \\
			0.10 & $0.930 \pm 0.016$ & $0.148 \pm 0.028$ & 0.852 \\
			0.20 & $0.896 \pm 0.034$ & $0.164 \pm 0.027$ & 0.836 \\
			0.50 & $0.665 \pm 0.014$ & $0.158 \pm 0.017$ & 0.842 \\
			1.00 & $0.382 \pm 0.005$ & $0.148 \pm 0.026$ & 0.852 \\
			\bottomrule
		\end{tabular}
	\end{table}
	
	\subsection{Gradient Clipping and Combined Defense (E3, E4)}
	\label{subsec}
	
	Clipping improves privacy by removing gradient-norm variance, the main magnitude channel distinguishing clients. Some identity information remains in gradient direction. The combined defense (E4: $\sigma!\in![0.01,0.1,0.2]$, $C!=!1.0$) achieves the strongest balance. Clipping bounds the $\ell_2$ sensitivity required for the Gaussian mechanism to satisfy DP semantics~\cite{abadi2016deep}; adding Gaussian noise without clipping provides no reliable DP guarantee.
	
	\subsection{Ensemble FL and Structural Privacy (E5)}
	
	Ensemble FL provides structural privacy without noise and naturally aligns with RSU-zone or trust-domain segmentation. By Proposition~1, the optimal attack accuracy for groups of $K$ clients is $1/K$; empirically we observe $0.50$, $0.33$, and $0.20$ for $K!=!2,3,5$, closely matching theory while maintaining FL accuracy near $0.92$. Larger $K$ increases anonymity but slightly reduces specialization. The $K!=!3$ setting in Table~\ref{tab:summary} offers the best balance between anonymity and accuracy for latency-sensitive VANETs where noise-induced degradation is undesirable.
	
	\subsection{Non-IID Heterogeneity and Summary (E6, E7)}
	
	Greater non-IID heterogeneity, representing differences in driving behavior, hardware, or geography, increases gradient distinguishability. Pathological partitioning (E7, $K_{\text{cls}}!=!2$ classes per client) produces nearly orthogonal updates and maximizes separability. Even under IID data (E6), attacks outperform random guessing during early rounds because of minibatch stochasticity and initialization effects. The strongest overall defense ($\sigma!=!0.01$, $C!=!1.0$) achieves the highest Privacy Score (0.880) while maintaining FL accuracy near $0.940$, supporting the DP principle of clipping before adding calibrated noise. Table~\ref{tab:summary} summarizes all conditions.
	
	\begin{table}[t]
		\centering
		\caption{Summary of Key Metrics ($N\!=\!10$, UCI HAR, 3 seeds). PS~=~Privacy Score; NAA~=~Normalized Attacker Advantage.}
		\label{tab:summary}
		\begin{tabular}{lcccc}
			\toprule
			\textbf{Condition} & \textbf{FL Acc.} & \textbf{Atk.\ Acc.} & \textbf{PS} & \textbf{NAA} \\
			\midrule
			E1: No defense         & ${\sim}0.944$ & ${\sim}1.000$ & ${\sim}0.000$ & ${\sim}1.000$ \\
			E2: $\sigma\!=\!0.00$  & ${\sim}0.944$ & ${\sim}0.127$ & ${\sim}0.873$ & ${\sim}0.030$ \\
			E2: $\sigma\!=\!0.10$  & ${\sim}0.930$ & ${\sim}0.148$ & ${\sim}0.852$ & ${\sim}0.053$ \\
			E2: $\sigma\!=\!0.50$  & ${\sim}0.665$ & ${\sim}0.158$ & ${\sim}0.842$ & ${\sim}0.064$ \\
			E4: Combined           & ${\sim}0.940$ & ${\sim}0.120$ & ${\sim}0.880$ & ${\sim}0.022$ \\
			E5: Ensemble $K\!=\!3$ & ${\sim}0.920$ & ${\sim}0.330$ & ${\sim}0.670$ & ${\sim}0.256$ \\
			E7: Pathological       & ${\sim}0.880$ & ${\sim}1.000$ & ${\sim}0.000$ & ${\sim}1.000$ \\
			\bottomrule
		\end{tabular}
	\end{table}
	
	\subsection{Limitations}
	\label{subsec:limits}
	
	Several limitations constrain these findings. UCI HAR shares IMU sensing modalities with vehicular systems but lacks realistic driving dynamics such as speed variation, road excitation, and GPS-correlated structure. Real vehicular datasets may therefore be either more or less separable. Experiments use only $N!=!10$ clients, $T!=!20$ rounds, and three seeds; larger studies would improve estimate reliability and DP accounting. The threat model also assumes a single honest-but-curious server, excluding colluding clients, adaptive attackers, and malicious aggregators that may achieve stronger inference. Finally, Proposition~1 assumes within-group indistinguishability, which strong intra-group non-IID skew may violate

	\section{Conclusion}
	\label{sec:conclusion}
	
	Federated learning does not by itself anonymize the clients that participate in it. Using inertial data, we showed that an honest-but-curious 5G/6G edge server can attribute transmitted model updates to individual clients with accuracy approaching 100\% under realistic non-IID partitions, across five classifiers. Correcting a common evaluation flaw, defending the per-batch gradient instead of the transmitted artifact, removes a false comfort prior setups offer. A lightweight clip-then-noise defense closes most of the gap: at $\sigma\!\in\![0.1, 0.2]$ with $C\!=\!1.0$, attack accuracy falls to near-random while FL accuracy stays above 0.89. R\'{e}nyi accounting tempers any overclaim, since these levels buy strong empirical privacy but only loose formal $(\epsilon,\delta)$ guarantees at $T\!=\!20$, with meaningful budgets needing $\sigma\!\geq\!0.5$ or longer training. Ensemble FL adds a structural $1/K$ bound that aligns with RSU-zone segmentation and avoids noise-induced utility loss, suiting latency-sensitive vehicular AI. The recipe is to clip then add noise calibrated to the clipped sensitivity, and to layer cross-cutting protections such as secure aggregation~\cite{bonawitz2017practical} and blockchain-based trust management where warranted. Confirming these effects on true vehicular telemetry, and against adaptive or colluding adversaries, is the immediate next step, alongside PRV-accountant analysis for extended training ($T\!\geq\!100$), 1D-CNN models for raw time series, and multi-access edge computing under realistic 5G/6G constraints.

%

	\section*{Data Availability}
	
	The complete framework is available at \url{https://github.com/aliakarma/PPFL-Sensors}; the UCI HAR dataset~\cite{anguita2013public} is on the UCI ML Repository.

	\bibliographystyle{IEEEtran}
	\bibliography{references}

@inproceedings{mcmahan2017communication,
	title={Communication-efficient learning of deep networks from decentralized data},
	author={McMahan, H. Brendan and Moore, Eider and Ramage, Daniel and Hampson, Seth and y Arcas, Blaise Aguera},
	booktitle={Proceedings of the 20th International Conference on Artificial Intelligence and Statistics},
	series={Proceedings of Machine Learning Research},
	volume={54},
	pages={1273--1282},
	year={2017},
	publisher={PMLR},
	doi={10.48550/arXiv.1602.05629}
}

@inproceedings{hsieh2020quagmire,
	title={The non-IID data quagmire of decentralized machine learning},
	author={Hsieh, Kevin and Phanishayee, Amar and Mutlu, Onur and Gibbons, Phillip B.},
	booktitle={Proceedings of the 37th International Conference on Machine Learning},
	year={2020},
	pages={4387--4398},
	publisher={PMLR},
	series={Proceedings of Machine Learning Research},
	volume={119},
	doi={10.48550/arXiv.1910.00189}
}

@inproceedings{zhu2019deep,
	author={Zhu, Ligeng and Liu, Zhijian and Han, Song},
	title={Deep leakage from gradients},
	booktitle={Advances in Neural Information Processing Systems 32 (NeurIPS 2019)},
	year={2019},
	volume={32},
	pages={14774--14784},
	publisher={Curran Associates, Inc.},
	doi={10.48550/arXiv.1906.08935}
}

@inproceedings{geiping2020inverting,
	author={Geiping, Jonas and Bauermeister, Hartmut and Dr\"{o}ge, Hannah and Moeller, Michael},
	title={Inverting gradients -- how easy is it to break privacy in federated learning?},
	booktitle={Advances in Neural Information Processing Systems 33 (NeurIPS 2020)},
	year={2020},
	volume={33},
	pages={16937--16947},
	publisher={Curran Associates, Inc.},
	doi={10.48550/arXiv.2003.14053}
}

@inproceedings{shokri2017membership,
	author={Shokri, Reza and Stronati, Marco and Song, Congzheng and Shmatikov, Vitaly},
	title={Membership inference attacks against machine learning models},
	booktitle={2017 IEEE Symposium on Security and Privacy (SP)},
	year={2017},
	pages={3--18},
	publisher={IEEE},
	doi={10.1109/SP.2017.41}
}

@inproceedings{nasr2019comprehensive,
	title={Comprehensive privacy analysis of deep learning: Passive and active white-box inference attacks against centralized and federated learning},
	author={Nasr, Milad and Shokri, Reza and Houmansadr, Amir},
	booktitle={2019 IEEE Symposium on Security and Privacy (SP)},
	year={2019},
	pages={739--753},
	publisher={IEEE},
	doi={10.1109/SP.2019.00065}
}

@inproceedings{wang2019beyond,
	title={Beyond Inferring Class Representatives: User-Level Privacy Leakage From Federated Learning},
	author={Wang, Zhibo and Song, Mengkai and Zhang, Zhifei and Song, Yang and Wang, Qian and Qi, Hairong},
	booktitle={IEEE INFOCOM 2019 - IEEE Conference on Computer Communications},
	pages={2512--2520},
	year={2019},
	organization={IEEE},
	doi={10.1109/INFOCOM.2019.8737416}
}

@article{dwork2014algorithmic,
	title={The Algorithmic Foundations of Differential Privacy},
	author={Dwork, Cynthia and Roth, Aaron},
	journal={Foundations and Trends{\textregistered} in Theoretical Computer Science},
	volume={9},
	number={3--4},
	pages={211--407},
	year={2014},
	publisher={Now Publishers, Inc.},
	doi={10.1561/0400000042}
}

@inproceedings{abadi2016deep,
	title={Deep learning with differential privacy},
	author={Abadi, Martin and Chu, Andy and Goodfellow, Ian and McMahan, H. Brendan and Mironov, Ilya and Talwar, Kunal and Zhang, Li},
	booktitle={Proceedings of the 2016 ACM SIGSAC Conference on Computer and Communications Security},
	pages={308--318},
	year={2016},
	organization={ACM},
	doi={10.1145/2976749.2978318}
}

@inproceedings{mironov2017renyi,
	title={R{\'e}nyi differential privacy},
	author={Mironov, Ilya},
	booktitle={2017 IEEE 30th Computer Security Foundations Symposium (CSF)},
	pages={263--275},
	year={2017},
	organization={IEEE},
	doi={10.1109/CSF.2017.11}
}

@inproceedings{bonawitz2017practical,
	title={Practical secure aggregation for privacy-preserving machine learning},
	author={Bonawitz, Keith and Ivanov, Vladimir and Kreuter, Ben and Marcedone, Antonio and McMahan, H. Brendan and Patel, Sarvar and Ramage, Daniel and Segal, Aaron and Seth, Karn},
	booktitle={Proceedings of the 2017 ACM SIGSAC Conference on Computer and Communications Security},
	pages={1175--1191},
	year={2017},
	doi={10.1145/3133956.3133982}
}

@inproceedings{ismail2014design,
	author = {Ismail, Roslan and Syed, Toqeer Ali and Musa, Shahrulniza},
	title = {Design and implementation of an efficient framework for behaviour attestation using n-call slides},
	year = {2014},
	isbn = {9781450326445},
	publisher = {Association for Computing Machinery},
	address = {New York, NY, USA},
	url = {https://doi.org/10.1145/2557977.2558002},
	doi = {10.1145/2557977.2558002},
	booktitle = {Proceedings of the 8th International Conference on Ubiquitous Information Management and Communication},
	articleno = {36},
	numpages = {8},
	location = {Siem Reap, Cambodia},
	series = {ICUIMC '14}
}

@inproceedings{li2021fedbn,
	title={FedBN: Federated learning on non-IID features via local batch normalization},
	author={Li, Xiaoxiao and Jiang, Meirui and Zhang, Xiaofei and Kamp, Michael and Dou, Qi},
	booktitle={International Conference on Learning Representations},
	year={2021},
	doi={10.48550/arXiv.2102.07623}
}

@inproceedings{anguita2013public,
	title={A Public Domain Dataset for Human Activity Recognition Using Smartphones},
	author={Anguita, Davide and Ghio, Alessandro and Oneto, Luca and Parra, Xavier and Reyes-Ortiz, Jorge Luis},
	booktitle={21st European Symposium on Artificial Neural Networks, Computational Intelligence and Machine Learning (ESANN)},
	pages={437--442},
	year={2013}
}

@article{syed2026climate,
	AUTHOR = {Syed, Toqeer Ali and Akarma, Ali and Naqash, Muhammad Tayyab and Hameed, Danial and Kamal, Shahid and Formisano, Antonio},
	TITLE = {Agentic AI for Climate-Resilient Cities: A PRISMA-Guided Review and Digital Twin Framework},
	JOURNAL = {Sustainability},
	VOLUME = {18},
	YEAR = {2026},
	NUMBER = {17},
	ARTICLE-NUMBER = {8917},
	URL = {https://www.mdpi.com/2071-1050/18/17/8917},
	ISSN = {2071-1050},
	DOI = {10.3390/su18178917}
}

@inproceedings{geyer2017differentially,
	title={Differentially Private Federated Learning: A Client Level Perspective},
	author={Geyer, Robin C. and Klein, Tassilo and Nabi, Moin},
	booktitle={NeurIPS 2017 Workshop on Machine Learning on the Phone and Other Consumer Devices},
	year={2017},
	doi={10.48550/arXiv.1712.07557}
}

@article{lyu2024privacy,
	title={Privacy and Robustness in Federated Learning: Attacks and Defenses},
	author={Lyu, Lingjuan and Yu, Han and Ma, Xingjun and Chen, Chen and Sun, Lichao and Zhao, Jun and Yang, Qiang and Yu, Philip S.},
	journal={IEEE Transactions on Neural Networks and Learning Systems},
	volume={35},
	number={7},
	pages={8726--8746},
	year={2024},
	publisher={IEEE},
	doi={10.1109/TNNLS.2022.3216981}
}

@article{sattler2020robust,
	title={Robust and Communication-Efficient Federated Learning From Non-{IID} Data},
	author={Sattler, Felix and Wiedemann, Simon and M\"{u}ller, Klaus-Robert and Samek, Wojciech},
	journal={IEEE Transactions on Neural Networks and Learning Systems},
	volume={31},
	number={9},
	pages={3400--3413},
	year={2020},
	publisher={IEEE},
	doi={10.1109/TNNLS.2019.2944481}
}

@article{mothukuri2021survey,
	title={A Survey on Security and Privacy of Federated Learning},
	author={Mothukuri, Viraaji and Parizi, Reza M. and Pouriyeh, Seyedamin and Huang, Yan and Dehghantanha, Ali and Srivastava, Gautam},
	journal={Future Generation Computer Systems},
	volume={115},
	pages={619--640},
	year={2021},
	publisher={Elsevier},
	doi={10.1016/j.future.2020.10.007}
}

@article{syed2026fedagent,
	AUTHOR = {Syed, Toqeer Ali and Siddiqui, Muhammad Shoaib and Akarma, Ali and Formisano, Antonio},
	TITLE = {FedAgent-Chain: A Secure Federated and Agentic AI Framework for Multilingual Disability-Inclusive Employment in AI Cities},
	JOURNAL = {Smart Cities},
	VOLUME = {9},
	YEAR = {2026},
	NUMBER = {7},
	ARTICLE-NUMBER = {106},
	URL = {https://www.mdpi.com/2624-6511/9/7/106},
	ISSN = {2624-6511},
	DOI = {10.3390/smartcities9070106}
}

@article{jan2026eagf,
	title     = {EAGF: A Four-Pillar Ethical AI Governance Framework for Trustworthy Cybersecurity in 5G Renewable Energy IoT Systems},
	author    = {Jan, Salman and Akarma, Ali and Syed, Toqeer Ali and Muhammad, Munir Azam and Kamal, Shahid},
	journal   = {Scientific Reports},
	year      = {2026},
	publisher = {Springer Nature},
	doi       = {10.1038/s41598-026-63383-5}
}

@article{syed2026agenticdt,
	title={Agentic AI-enhanced digital twins for Smart City civil infrastructure: A secure, autonomous and auditable management framework},
	author={Syed, Toqeer Ali and Akarma, Ali and Alatify, Ali and Naqash, Muhammad Tayyab and Alqurashi, Abdulaziz},
	journal={PLoS One},
	volume={21},
	number={7},
	pages={e0353610},
	year={2026},
	publisher={Public Library of Science},
	doi={10.1371/journal.pone.0353610}
}
	
\end{document}